\newtheorem{definition}{Definition}[section]
\newtheorem{theorem}{Theorem}[section]
\newtheorem{corollary}{Corollary}[section]
\journal{Journal of Multivariate Analysis}
\begin{document}

\begin{frontmatter}

\title{Extremal $t$ processes: Elliptical domain of attraction and a spectral representation}

\author{T. Opitz}
\address{I3M (CC51), Universit\'e Montpellier 2 \\ Place Eug\`ene Bataillon \\
  34095 Montpellier cedex 5}
\ead{thomas.opitz@math.univ-montp2.fr}
\ead[url]{http://ens.math.univ-montp2.fr/~opitz/}
\begin{abstract}
The extremal $t$ process was proposed in the literature for modeling spatial extremes within a copula framework
based on 
the extreme value limit of elliptical $t$
distributions (Davison,
Padoan and Ribatet (2012)). A major drawback of this max-stable model was the lack of a
spectral representation  such that for instance direct simulation was infeasible.
The main contribution of this note  is to propose such a  spectral construction for the
extremal $t$ process. Interestingly, the extremal Gaussian process introduced by
Schlather (2002) appears as a special
case.
We further highlight the role of the
extremal $t$ process as the maximum attractor for processes with finite-dimensional
elliptical distributions. 
All results naturally also hold within the multivariate domain. 

\end{abstract}

\begin{keyword}
elliptical distribution \sep extremal $t$ process  \sep max-stable
process \sep spectral construction
\end{keyword}
\end{frontmatter}
\section{Introduction}

\cite{davison2009modelling} survey the statistical modeling of
spatial extremes and provide a global
view on available  models and their interconnections. Among these models, the extremal $t$
process represents a max-stable process that generalizes the $t$ extreme
value copula to infinite dimension. It is well defined, 
 yet  no direct construction
was known back then which lead the authors to class it among copula
models characterized by their motivation from multivariate
considerations.  
An application to Swiss rainfall
data in that paper bears witness of its versatility for extremal dependence
modeling.  
In the following, we show that  the extremal
$t$ process provides  a natural connection  
between two prominent max-stable model classes, namely Schlather's extremal Gaussian process (\cite{schlather2002models}) and
the Brown-Resnick process  defined in
\cite{brown1977extreme}  and revisited in a more general context in   \cite{kabluchko2009stationary}.
The connection to the
Brown-Resnick process was detailed for the multivariate context in
\cite{nikoloulopoulos2009extreme} and is related
to the study of elliptical triangular arrays with the H\"usler-Reiss
distribution (\cite{husler1989maxima}; \cite{falk2010laws})  as the
maximum attractor (\cite{hashorva2005elliptical}). It was then interpreted for the spatial context in \cite{davison2009modelling}.
The extremal $t$ dependence structure is further proposed for
semi-parametric inference in a multivariate context by
\cite{klueppelberg2007estimating} and \cite{klueppelberg2008semi}.
We conceive
a spectral representation for the extremal $t$ process that generalizes the one of the extremal
Gaussian process. It  renders direct simulation possible
for moderately large general degrees of freedom.

The remainder of the paper is organized as follows:
Section 2 gives some background in extreme value theory and
reviews results for elliptical distributions. Spectral constructions of multivariate extremal $t$ distributions and extremal $t$ processes are presented in
Section 3, along with a statement on the domain of attraction for
processes with finite-dimensional elliptical distributions. We conclude with a
discussion and potential future developments in Section 4. 

The following notational conventions shall apply in the remainder of
the paper:
If not stated otherwise,  operations on
vectorial arguments like maxima or arithmetic operations must be interpreted
componentwise. Vectors are typeset in bold face, in particular the
vector constants $\mathbf{0}=(0,...,0)^T$ and
$\mathbf{1}=(1,...,1)^T$.   Rectangular bounded or unbounded sets are given
according to notations like
$[\mathbf{u},\mathbf{v}] = [u_1,v_1]\times \cdots \times
[u_d,v_d]$ or $(\mathbf{0},\boldsymbol{\infty})=(0,\infty)\times
\cdots \times (0,\infty)$. The complementary set of a set $B$ in
$\mathbb{R}^d$ is written $B^c$. The truncation operator $x^+=\max(x,0)$ maps negative
values to $0$. The indicator function of a set $B$ is denoted by $\chi_B(\cdot)$.
\section{Extreme value theory}
For a more detailed account of max-stability and extreme value theory
in general we refer the reader  to 
the textbooks of \cite{beirlant2004statistics} and
\cite{de2006extreme}. 
\subsection{Max-stability}
Let $\mathbf{Z}, \mathbf{Z_1},\mathbf{Z_2},...$
be a sequence of independent and identically distributed
(iid) random vectors in $\mathbb{R}^d$ ($d\geq 1$) with  nondegenerate
univariate marginal distributions.  
We say that $\mathbf{Z}$ follows a max-stable distribution $G$ if sequences of  normalizing vectors
$\mathbf{a_n}>\mathbf{0}$ and $\mathbf{b_n}$ ($n=1,2,...$) exist such
that 
 the equality in distribution
\begin{equation}
\label{eq:1}
 \max_{i=1,...,n} \mathbf{a_n}^{-1}(\mathbf{Z_i}-\mathbf{b_n})
 \overset{d}{=} \mathbf{Z} \sim G
\end{equation}
holds for the componentwise maximum. 
A full characterization of multivariate max-stable distributions leads to rather
technical expressions. For our purposes, it is convenient to focus 
on common
$\alpha$-Fr\'echet marginal distributions $G_j(z_j) = \Phi_\alpha(z_j)= 
\exp(-z_j^{-\alpha})\chi_{(0,\infty)}(z_j)$ ($j=1,...,d$) for some \emph{tail index} $\alpha>0$. 
Monotone and parametric marginal 
transformations allow reconstructing all admissible univariate max-stable marginal
scales  in \eqref{eq:1} from this particular marginal scale. More
precisely, the
class of univariate max-stable distributions is partitioned into
the class of $\alpha$-Fr\'{e}chet distribution under   strictly increasing linear
transformations and further the so-called Gumbel and Weibull classes. 


With $\alpha$-Fr\'echet marginal distributions, the \emph{standard
  exponent measure} $\mathbb{M}$ can be defined on
$[\mathbf{0},\boldsymbol{\infty}) \setminus \{\mathbf{0}\}$ by $\mathbb{M}((\mathbf{0},\mathbf{z}]^c) = -\log \mathbb{P}(\mathbf{Z^\alpha}\leq
\mathbf{z})$ with the convention $-\log 0 = \infty$ and characterizes the
dependence structure in $G$ on a standardized scale; it is uniquely defined by  the dependence
function
$M(\mathbf{z})=\mathbb{M}((\mathbf{0},\mathbf{z}]^c)$ which takes the
value $\infty$ whenever $\min_j z_j=0$ such that $\mathbf{z}\not\in (\mathbf{0},\boldsymbol{\infty})$.
The
extremal coefficient $M(\mathbf{1})\in [1,d]$ can serve as  an indicator  of the
strength of extremal
dependence, ranging from full dependence 
associated with the value
$1$  to independence associated with the value $d$ (cf. \cite{schlather2003dependence}). 

In the infinite-dimensional domain, we call a  stochastic process $\mathbf{Z}=\{Z(s), s\in S \subset
\mathbb{R}^p\}$ ($p\geq 1$) with a non-empty Borel set $S$ max-stable if its finite-dimensional distributions are
max-stable. If $\mathbf{Z_1}, \mathbf{Z_2},...$ are iid copies of $\mathbf{Z}$, then sequences of functions $a_n(s)>0$ and $b_n(s)$
($n\geq 1$)  exist  such that
$\{\max_{i=1,...,n}a_n(s)^{-1}(Z_i(s)- b_n(s))\}
\overset{d}{=} \{ Z(s)\}$. 
\subsection{Domain of attraction}
Let  $\mathbf{X}, \mathbf{X_1}, \mathbf{X_2}, ...$ be a
sequence of  iid
random vectors  in $\mathbb{R}^d$ with distribution function $F$.  For suitably chosen normalizing sequences,  relation \eqref{eq:1} can hold
 asymptotically in the sense of distributional convergence with nondegenerate
marginal distributions in the
limit $\mathbf{Z}$: 
 \begin{equation}
\label{eq:MDA}
\max_{i=1,...,n} \mathbf{a_n}^{-1}(\mathbf{X_i}-\mathbf{b_n})
 \overset{d}{\rightarrow} \mathbf{Z} \quad (n\rightarrow \infty)\ . 
\end{equation}
We say that 
the distribution $F$  of $\mathbf{X}$ is  in
 the max-domain of attraction (MDA) of the max-stable distribution
 $G$ of $\mathbf{Z}$, or simply that $\mathbf{X}$ is in the MDA
 of $\mathbf{Z}$.
Normalizing sequences are not unique and the limit distribution $G$ is
unique up to  a linear transformation. If normalizing constants
can be chosen such that all the univariate marginal
distributions $G_j$  are of the same $\alpha$-Fr\'echet type, then the
particular choice of $\mathbf{b_n}=\mathbf{0}$ is admissible. In this
case,  the convergence in distribution \eqref{eq:MDA} is equivalent
to 
\begin{equation}
n\, \mathbb{P}(\mathbf{a_n}^{-1}\mathbf{X} \not\leq \mathbf{z}) \rightarrow
M(\mathbf{z}^{\alpha}) \quad \text{ for all } \mathbf{z} \in
(\mathbf{0},\boldsymbol{\infty})\ . 
\end{equation}
For $d=1$, we have $n\, \mathbb{P}(a_n^{-1}X \geq z)
\rightarrow z^{-\alpha}$ ($z>0$), and then $X$ is said to be \emph{regularly
  varying at $\infty$ with index $\alpha>0$} or just \emph{regularly
  varying} in the remainder of this paper, denoted as $X \in
\mathrm{RV}_\alpha$.  The normalizing sequence
can be chosen as $a_n = \inf\{x : P(X\geq x) \leq n^{-1}\}$.


For stochastic processes, the notion of MDA  is defined
in the sense of the convergence of all finite-dimensional
distributions according to \eqref{eq:MDA}. 
\subsection{A spectral representation for max-stable processes}
The commonly used models for  max-stable processes are generated with  so-called spectral
constructions whose first appearance dates back to the seminal paper of \cite{de1984spectral}.
\cite{schlather2002models} proposes to use a Poisson
process $\{V_i\} \sim \mathrm{PRM}(v^{-2}dv)$ on $(0,\infty)$  and
iid replicates $\mathbf{Q_i}$ of  an
integrable random process $\mathbf{Q}$,  independent of $\{V_i\}$ and
with $\mathbb{E} Q^+(s)=1$ ($s\in S$),  in order to construct the max-stable process
\begin{equation}
\label{eq:3}
\mathbf{Z} =\{Z(s)\} = 
\left\{\max_{i=1,2,...}V_i Q_i(s)\right\} \quad (s \in S)
\end{equation}
with univariate marginal distributions of type $\Phi_1$. 
It is possible to replace $Q_i(s)$ by the zero-truncated value
$Q_i^+(s)$ in this construction. Subsequently, without loss of
generality we assume that the points
$V_i$ are in
descending order such that $V_1\geq V_2\geq ...$ and $V_1\sim \Phi_1$.
We obtain extremal
Gaussian processes by choosing a centered and appropriately scaled
Gaussian process $\mathbf{W}$
for $\mathbf{Q}$. Other choices of $\mathbf{Q}$ were considered (cf. \cite{davison2009modelling}), leading for instance to
so-called Brown-Resnick processes (\cite{brown1977extreme}; \cite{kabluchko2009stationary}). 
The dependence function of $\mathbf{Z}$ for a finite number of points $s_1,...,s_d \in
S$ is
\begin{equation}
\label{eq:Mprocess}
M_{s_1,...,s_d}(\mathbf{z}) = \mathbb{E} \max_{j=1,...,d}
\left(z_j^{-1}Q^+(s_j)\right) \ . 
\end{equation} 

%
\subsection{Multivariate $t$ distributions and extremal dependence}
\subsubsection{Elliptical distributions}
\begin{definition}[Elliptically distributed random vectors]
\label{def:ell}
A random vector $\mathbf{X}$ in $\mathbb{R}^d$ is said to follow a
(non-singular)  elliptical
  distribution if it allows for a stochastic representation
$\mathbf{X} \overset{d}{=} \boldsymbol{\mu} + R_dA\mathbf{U}$ 
 with a deterministic location vector
$\boldsymbol{\mu}$, an invertible $d\times d$
 matrix $A$ that defines the dispersion matrix  $\Sigma=AA^T=(\sigma_{j_1j_2})_{1\leq j_1,j_2\leq d}$  and  a nondegenerate random variable $R_d\geq 0$
 independent from a random vector $\mathbf{U}$ uniformly distributed  on
 the Euclidean unit sphere 
$\{\mathbf{x} \in
\mathbb{R}^d \mid \mathbf{x}^T \mathbf{x}=1\}$. We call $R_d$ the
radial variable. 
\end{definition}
Here we prefer to remain within the framework of  quadratic and
non-singular $A$ to avoid an overly technical presentation for the
more general  cases (cf. \cite{anderson1990theory} for the general representation).
Random vectors following an elliptical multivariate $t$ distribution
are an important example: 
We say that an elliptically distributed random vector $\mathbf{X}$ in
$\mathbb{R}^d$ follows the
\emph{multivariate $t$ distribution with $\nu>0$ (general) degrees of
  freedom} if  $d^{-1}R_d^2\sim
F_{d,\nu}$, where $F_{d,\nu}$ is the $F$-distribution with degrees of
freedom $d$ and $\nu$.
We write $\mathbf{X}\sim t_\nu(\boldsymbol{\mu}, \Sigma)$ and
$\mathbb{P}(\mathbf{X}\leq \mathbf{x}) =  t_\nu(\mathbf{x} \mid \boldsymbol{\mu}, \Sigma)$.
The  multivariate $t$ distribution can be constructed as a variance
mixture of the multivariate normal distribution: 
With $\nu Y^{-1}\sim \mathrm{Gamma}(0.5\nu, 2)$ ($\nu>0$) and a multivariate normal random vector $\mathbf{W}\sim
N(\mathbf{0},\Sigma)$ that is independent of $Y$, we obtain
$\boldsymbol{\mu}+ \sqrt{Y}
\mathbf{W}\sim t_\nu(\boldsymbol{\mu},\Sigma)$, see for instance \cite{demarta2005t}. 
This construction is readily  generalized to the infinite-dimensional
setting on a domain $S$: If $\mathbf{W}$ is a centered Gaussian
process with domain $S$ and covariance
function $\mathrm{Cov}$ and  $\nu Y^{-1}\sim
\mathrm{Gamma}(0.5 \nu, 2)$ independent of $\mathbf{W}$, then we call the random process $
\sqrt{Y} \mathbf{W}$  a \emph{(centered) $t$ random process} on
$S$ which is
characterized by  the degree
  of freedom $\nu$ and the  dispersion function $\mathrm{Cov}$
  (cf. \cite{roislien2006t}). 

\subsubsection{The maximum attractor}
The multivariate $t$ distribution fulfills the MDA condition
\eqref{eq:MDA}. 
For normalizing constants $\mathbf{b_n}=\mathbf{0}$ and  $\mathbf{a_n}=n^{1/\nu}(\sigma_{jj}^{0.5}c_\nu^{1/\nu})_{j=1,...,d}$
with $c_\nu=\, \Gamma(0.5(\nu+1))^{-1}\nu^{-0.5\nu+1} \sqrt{\pi} \,
   \Gamma(0.5\nu)$ (cf. Table 2.1 on page 59 in \cite{beirlant2004statistics}), we obtain
 $\nu$-Fr\'echet marginal distributions in the max-stable limit
 distribution $G$ of the multivariate $t$ distribution $t_\nu(\boldsymbol{\mu}, \Sigma)$.
The dependence
 function of $G$ was derived by \cite{nikoloulopoulos2009extreme}: Denote by
 $\Sigma^*=(\sigma^*_{j_1,j_2})_{j_1,j_2}$ the correlation matrix  that corresponds to the
 dispersion matrix $\Sigma$, and by $\Sigma^*_{-j,-j} = (\sigma^*_{j_1,j_2})_{j_1\not=j,j_2\not=j}
 $ or $\Sigma^*_{j,-j}=(\Sigma^*_{-j,j})^T
 =(\sigma^*_{j_1,j_2})_{j_1=j,j_2\not=j}$ submatrices obtained by
 removing some of the rows or columns. Similarly for vectors, we write
 $\mathbf{z}_{-j} = (z_1,...,z_{j-1},z_{j+1},...,z_d)^T$.
Then 
\begin{equation}
\label{eq:Mt}
M_{\nu,\Sigma^*}(\mathbf{z}) = \sum_{j=1}^d z_j^{-1}\,
t_{\nu+1}\left((\mathbf{z}_{-j}/z_j)^{\nu^{-1}} \mid
  \Sigma^*_{-j,j},
  (\nu+1)^{-1}\left(\Sigma^*_{-j,-j}-\Sigma^*_{-j,j}\Sigma_{-j,j}^{*T}\right)\right) \ . 
\end{equation}
We refer to the max-stable limit as
\emph{extremal $t$ distribution}, and we call \emph{extremal $t$
  process} the max-stable limit of a $t$ random process which is a
generalization of the extremal $t$ distribution. Its dependence
structure is characterized by the set of dependence functions for all
finite-dimensional distributions, hence by a general degree of freedom
$\nu>0$ and the correlation function 
\[
\mathrm{Cov}^*(s_{j_1},s_{j_2})=\left[\mathrm{Cov}(s_{j_1},s_{j_1})
  \mathrm{Cov}(s_{j_2},s_{j_2})\right]^{-0.5}\mathrm{Cov}(s_{j_1},s_{j_2})
\quad (s_{j_1},s_{j_2}\in S)
\]
which corresponds to the dispersion function $\mathrm{Cov}$ and
determines the matrices $\Sigma^*$ for all finite-dimensional distributions.
Similar to the copula approach in the
multivariate domain, it is convenient to  call extremal $t$ process any
max-stable process whose set of dependence functions is the
same. 

In general, a regularly varying  radial variable $R_d \in \mathrm{RV}_\alpha$ ensures
the MDA condition for an elliptical distribution and  is a
necessary and sufficient condition for the presence of asymptotic
dependence, see Theorem 4.3 in \cite{hult2002multivariate}.  
Moreover, Theorem 3.1 of \cite{hashorva2006regular} establishes the equivalence of regular
variation of $R_d$ to regular variation  of any  component $X_j$ of $\mathbf{X}$.  
The index of regular variation $\alpha$ is the same across components and the
radial variable,  and the dependence
function $M$ depends only on $\alpha$ and the correlation matrix
$\Sigma^*$. 
For elliptical distributions in the Gumbel or Weibull  MDA, 
one always obtains independence in
the max-stable limit $G$. 
Since the multivariate $t$
distribution covers the full range of indices $\alpha>0$ (equal to the
general degree of freedom $\nu$)  and
correlation matrices $\Sigma^*$, the extremal $t$ dependence
structure is exhaustive within the class of asymptotically dependent
elliptical distributions.

\section{Main results}
The following theorem establishes the extremal $t$ process as the
maximum attractor for processes with finite-dimensional elliptical
distributions and asymptotic dependence. 
 \begin{theorem}[Elliptical domain of attraction]
 Let $\mathbf{X}=\{X(s), s\in S\}$ be a random process that has
finite-dimensional elliptical distributions according to the
dispersion function $\mathrm{Cov}$.
Suppose that $|\mathrm{Cov}^*(s_{j_1},s_{j_2})| <1$ for all
$s_{j_1}\not=s_{j_2}$, where $\mathrm{Cov}^*$ is the correlation
function that corresponds to $\mathrm{Cov}$.  Assume that one of
the two following conditions is fulfilled for $\mathbf{X}$:
\begin{itemize}
  \item At least one of the finite-dimensional distributions 
    for $d\geq 2$
    is in a
    multivariate MDA with asymptotic dependence. 
    \item At least one of the univariate marginal distributions of $\mathbf{X}$ is regularly
      varying. 
\end{itemize}
Then  a max-stable limit process $\mathbf{Z}$
exists   for $\mathbf{X}$  and is an extremal $t$ process. Its dependence functions for finite-dimensional
    distributions are given by \eqref{eq:Mt}.
\end{theorem}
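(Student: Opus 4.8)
The plan is to reduce this infinite-dimensional statement to a fact about a single, common regular-variation index shared by the radial variables of all finite-dimensional margins, and then to transport the explicit dependence function \eqref{eq:Mt} from the multivariate $t$ case to the general elliptical case by a uniqueness argument of the form ``same index, same correlation, same limit.''

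First I would show that either hypothesis forces every finite-dimensional radial variable $R_d$ to be regularly varying with one and the same index $\alpha>0$. Under the first assumption, some $d$-dimensional elliptical margin ($d\geq2$) lies in a multivariate MDA with asymptotic dependence, so Theorem 4.3 of \cite{hult2002multivariate} yields $R_d\in\mathrm{RV}_\alpha$; under the second, a univariate margin $X(s_0)$ is regularly varying, and Theorem 3.1 of \cite{hashorva2006regular} converts this into $R_d\in\mathrm{RV}_\alpha$ for any finite margin containing $s_0$. To propagate $\alpha$ across the whole index set $S$ I would exploit that for any pair $s_1,s_2$ the bivariate margin $(X(s_1),X(s_2))$ is again elliptical: by Hashorva's equivalence, $X(s_1)\in\mathrm{RV}_\alpha$ forces the bivariate radial variable into $\mathrm{RV}_\alpha$, which in turn forces $X(s_2)\in\mathrm{RV}_\alpha$ with the \emph{same} index. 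Chaining this pairwise across $S$ shows that every univariate margin, and hence (again by Hashorva) every finite-dimensional radial variable $R_d$, is regularly varying with the single index $\alpha$.

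Next, with $R_d\in\mathrm{RV}_\alpha$ for each finite margin, Theorem 4.3 of \cite{hult2002multivariate} guarantees that each $d$-dimensional distribution is in the MDA with asymptotic dependence and that the resulting dependence function depends only on $\alpha$ and on the correlation matrix $\Sigma^*$ induced by $\mathrm{Cov}$. The hypothesis $|\mathrm{Cov}^*(s_{j_1},s_{j_2})|<1$ for $s_{j_1}\neq s_{j_2}$ enters here, ensuring that $\Sigma^*$ is non-degenerate so that the conditional covariance $\Sigma^*_{-j,-j}-\Sigma^*_{-j,j}\Sigma^{*T}_{-j,j}$ in \eqref{eq:Mt} is positive definite and the limit is genuinely an extremal $t$ law rather than a degenerate one. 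I would then use the multivariate $t_\alpha(\mathbf{0},\Sigma)$ distribution as a reference: it is elliptical with radial index exactly $\alpha$ and correlation matrix $\Sigma^*$, and its max-stable limit has the explicitly known dependence function $M_{\alpha,\Sigma^*}$ of \eqref{eq:Mt} derived in \cite{nikoloulopoulos2009extreme}. Since the limiting dependence function is a function of $(\alpha,\Sigma^*)$ alone, the limit of the general elliptical margin of $\mathbf{X}$ must coincide with that of this $t$ reference; setting $\nu=\alpha$ identifies every limiting finite-dimensional law as an extremal $t$ distribution.

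Finally I would assemble the finite-dimensional convergences into a process statement. Normalizing each coordinate by its univariate tail quantile $a_n(s)$, available because each margin is $\mathrm{RV}_\alpha$, yields convergence of all finite-dimensional distributions to extremal $t$ laws; consistency of these limit laws is inherited from the consistency of the finite-dimensional distributions of $\mathbf{X}$, so the Kolmogorov extension theorem delivers a max-stable limit process $\mathbf{Z}$ whose finite-dimensional dependence functions are exactly \eqref{eq:Mt}, i.e.\ an extremal $t$ process. The step I expect to be the main obstacle is the universality underlying the identification: establishing rigorously that two elliptical families sharing $(\alpha,\Sigma^*)$ but differing in the precise law of $R_d$ produce identical limit dependence functions, so that the explicit $t$ formula may legitimately be exported to the general elliptical case. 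This is precisely where the cited regular-variation characterizations must be leveraged most carefully.
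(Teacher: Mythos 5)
Your proposal follows essentially the same route as the paper's proof: reduce both hypotheses to regular variation of a univariate margin, chain the index $\alpha$ across all sites via the Hashorva equivalence between regular variation of components and of the bivariate radial variable, invoke Hult--Lindskog to get regular variation of every finite-dimensional radial variable and hence convergence to an extremal $t$ limit depending only on $(\alpha,\Sigma^*)$, and assemble the finite-dimensional limits into the process $\mathbf{Z}$. You spell out more explicitly than the paper the identification step (matching against the $t_\alpha(\mathbf{0},\Sigma)$ reference to import \eqref{eq:Mt}) and the Kolmogorov-consistency assembly, but these are exactly the facts the paper relies on implicitly, so the two arguments coincide in substance.
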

\begin{proof}
We play on the equivalence of the regular variation condition for the
radial variable or for one of the components.
It is clear from Theorem 4.2 in \cite{hult2002multivariate}
(in the following referred to as HL) and
the equivalence that our first condition entails the
second one. Now assume $X(s_0)$ is regularly varying. For any
bivariate vector $(X(s_0), X(s))^T$ with $s\not=s_0$, the equivalence
dictates that
the radial variable $R_2$ associated to the bivariate random vector is
regularly varying, hence there is asymptotic dependence due to
HL. Since $X(s)$ is also regularly varying, we can iterate this
argument for all bivariate 
vectors $(X(s_1),X(s_2))$ with $s_1\not=s_2$
to prove that there is bivariate asymptotic dependence. Applying anew  HL
for any collection of distinct sites $s_1,...,s_d$ yields regular
variation for the associated radial variable $R_d$. 
Consequently, all finite-dimensional distributions of $\mathbf{X}$ possess max-stable limit
distributions of the extremal $t$ type which then constitute the extremal $t$
limit process $\mathbf{Z}$.
\end{proof}

We now provide a multivariate spectral construction for the extremal $t$
distribution based on elliptical distributions. 
\begin{theorem}[Multivariate spectral construction]
\label{theorem:ell}
Suppose the following items are given:
\begin{itemize}
  \item a tail index $\alpha>0$ and a correlation matrix $\Sigma^*$, 
    \item iid replications $\mathbf{X_i}$ of an elliptically distributed random vector
      $\mathbf{X}=(X_1,...,X_d)$ with  dispersion matrix $\Sigma=\Sigma^*$  and location vector $\boldsymbol{\mu}=\mathbf{0}$ such that the expectation $m_\alpha=\mathbb{E}
      [(X_1^+)^\alpha]$ is non-null and finite and 
      \item a Poisson process $\{ V_i\}\sim
        \mathrm{PRM}(\alpha v^{-(\alpha+1)}dv)$ on
        $(0,\infty)$. 
\end{itemize}
Define the componentwise maximum
\begin{equation}
\label{eq:spectral}
   \mathbf{Z} = m_\alpha^{-\alpha^{-1}}\max_{i=1,2,...} V_i
   \mathbf{X_i} \ .
\end{equation}
Then $\mathbf{Z}$ follows the extremal $t$ distribution with
$\alpha$-Fr\'echet marginal distributions and dependence function $M_{\alpha,\Sigma^*}$.
\end{theorem}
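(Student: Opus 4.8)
The plan is to read off the law of $\mathbf{Z}$ directly from the Poisson construction, check the marginals, and then identify the resulting dependence function with $M_{\alpha,\Sigma^*}$ of \eqref{eq:Mt}. First I would view $\Pi=\{(V_i,\mathbf{X_i})\}$ as a Poisson random measure on $(0,\infty)\times\mathbb{R}^d$ with intensity $\mu(dv,d\mathbf{x})=\alpha v^{-(\alpha+1)}dv\otimes P_{\mathbf{X}}(d\mathbf{x})$, which follows from the marking theorem given the independence of $\{V_i\}$ and the $\mathbf{X_i}$. For $\mathbf{z}\in(\mathbf{0},\boldsymbol{\infty})$ the event $\{\mathbf{Z}\leq\mathbf{z}\}$ is exactly the event that $\Pi$ places no point in $A_{\mathbf{z}}=\{(v,\mathbf{x}):vx_j>m_\alpha^{\alpha^{-1}}z_j\text{ for some }j\}$, so the void-probability formula gives $\mathbb{P}(\mathbf{Z}\leq\mathbf{z})=\exp(-\mu(A_{\mathbf{z}}))$. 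Carrying out the inner $v$-integral, using $\int_c^\infty \alpha v^{-(\alpha+1)}dv=c^{-\alpha}$ with $c(\mathbf{x})=(\max_j x_j^+/(m_\alpha^{\alpha^{-1}}z_j))^{-1}$, yields
\[
-\log\mathbb{P}(\mathbf{Z}\leq\mathbf{z})=m_\alpha^{-1}\,\mathbb{E}\Big[\max_{j}\big(X_j^+/z_j\big)^{\alpha}\Big],
\]
which is finite since $m_\alpha<\infty$. Sending all but one coordinate to $+\infty$ collapses the maximum to a single term and, using $\mathbb{E}[(X_j^+)^\alpha]=m_\alpha$ for every $j$ (valid by the unit diagonal of $\Sigma^*$ and the spherical symmetry of $\mathbf{U}$), returns $z^{-\alpha}$; hence the marginals are standard $\Phi_\alpha$ and $\mathbf{Z}$ is max-stable with dependence function $M(\mathbf{z}^\alpha)=m_\alpha^{-1}\mathbb{E}[\max_j(X_j^+/z_j)^\alpha]$.

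Next I would insert the stochastic representation $\mathbf{X}=R_dA\mathbf{U}$ with $\boldsymbol{\mu}=\mathbf{0}$. Writing $a_j^T$ for the $j$-th row of $A$, we have $X_j^+=R_d\,(a_j^T\mathbf{U})^+$, so $\max_j(X_j^+/z_j)^\alpha=R_d^\alpha\max_j((a_j^T\mathbf{U})^+/z_j)^\alpha$. Since $R_d$ is independent of $\mathbf{U}$, the factor $\mathbb{E}[R_d^\alpha]$ — finite and non-zero precisely because $m_\alpha$ is, through $m_\alpha=\mathbb{E}[R_d^\alpha]\,\mathbb{E}[((a_1^T\mathbf{U})^+)^\alpha]$ — cancels between numerator and normalization, leaving
\[
M(\mathbf{z}^\alpha)=\frac{\mathbb{E}\big[\max_j\big((a_j^T\mathbf{U})^+/z_j\big)^\alpha\big]}{\mathbb{E}\big[\big((a_1^T\mathbf{U})^+\big)^\alpha\big]}.
\]
The right-hand side involves only $\mathbf{U}$, uniform on the sphere, and the rows of $A$, whose Gram matrix is $AA^T=\Sigma=\Sigma^*$; it therefore depends on the underlying elliptical family solely through $\alpha$ and $\Sigma^*$, and not at all on the radial law.

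The remaining, and principal, step is to recognize this quantity as the extremal $t$ dependence function $M_{\alpha,\Sigma^*}$. Here I would exploit the radial invariance just established: the displayed ratio is unchanged if $R_d$ is replaced by a \emph{regularly varying} radial variable $\tilde R_d\in\mathrm{RV}_\alpha$, producing an elliptical vector $\tilde{\mathbf{X}}=\tilde R_dA\mathbf{U}$ with the same correlation matrix $\Sigma^*$ and tail index $\alpha$. For such a vector the max-domain-of-attraction limit exists and, by the equivalences recalled in Section~2.4 (those of \cite{hult2002multivariate} and \cite{hashorva2006regular} together with the dependence function of \cite{nikoloulopoulos2009extreme}), equals the extremal $t$ distribution with dependence function $M_{\alpha,\Sigma^*}$. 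Computing $n\,\mathbb{P}(\mathbf{a_n}^{-1}\tilde{\mathbf{X}}\not\leq\mathbf{z})$ by conditioning on $\mathbf{U}=\mathbf{u}$, rewriting the exceedance event as $\{\tilde R_d>a_n/\max_j((a_j^T\mathbf{u})^+/z_j)\}$, and passing to the limit via regular variation reproduces exactly the ratio above as the exponent $M_{\alpha,\Sigma^*}(\mathbf{z}^\alpha)$; matching it with the expression obtained in the previous step gives $M=M_{\alpha,\Sigma^*}$.

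I expect the main obstacle to lie in this last step: justifying the interchange of limit and expectation in the domain-of-attraction computation (uniform integrability via Potter bounds on $\tilde R_d$), and verifying that the normalizing sequence $\mathbf{a_n}$ places both sides on the same standard $\alpha$-Fr\'echet scale as the $m_\alpha^{-\alpha^{-1}}$ normalization in \eqref{eq:spectral}. A purely computational alternative — substituting the variance-mixture representation of the multivariate $t$ into the spherical integral and reducing it to the $(\nu+1)$-dimensional $t$ distribution function — is available but substantially more laborious, so the radial-invariance route seems preferable.
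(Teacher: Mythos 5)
Your proof is correct, but it follows a genuinely different route from the paper's. You compute the exponent measure directly from the void probabilities of the marked Poisson process, obtaining the de Haan spectral form $-\log\mathbb{P}(\mathbf{Z}\leq\mathbf{z})=m_\alpha^{-1}\mathbb{E}\max_j(X_j^+/z_j)^\alpha$, then factor out the radial moment via $\mathbf{X}=R_dA\mathbf{U}$ to show the dependence function depends only on $\alpha$ and $\Sigma^*$, and finally match it against the known MDA limit of a regularly varying elliptical vector (e.g.\ $t_\alpha(\mathbf{0},\Sigma^*)$ itself) computed by conditioning on $\mathbf{U}$. The paper instead avoids computing the exponent measure altogether: it observes that $\mathbf{Z}^\alpha$ is an instance of Schlather's construction \eqref{eq:3} with $\mathbf{Q}=(\mathbf{X}^+)^\alpha/m_\alpha$ (giving max-stability and the $\alpha$-Fr\'echet margins at once), invokes Breiman's lemma to see that $V_1\mathbf{X_1}$ is a regularly varying elliptical vector with index $\alpha$ and correlation matrix $\Sigma^*$ (hence in the MDA of the extremal $t$ distribution), and then uses Lemma 3.1 of \cite{segers2012maxstable}, which places $V_1\mathbf{X_1}$ in the MDA of $\mathbf{Z}$; uniqueness of the max-stable limit identifies $\mathbf{Z}$. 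Your version is more self-contained and yields the explicit and useful spectral form of $M$ as a by-product, at the cost of the limit--expectation interchange in the conditioning step (which you correctly flag, and which is handled by the uniform convergence theorem for regular variation since $\max_j((a_j^T\mathbf{u})^+/z_j)$ is bounded over the sphere); the paper's version is shorter and pushes all analytic work into the three cited results. Both arguments ultimately rest on the same external identification of the MDA limit of a regularly varying elliptical vector with \eqref{eq:Mt}.
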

\begin{proof}
Due to infinite number of points  in the Poisson process $\{V_i\}$, we can
replace $\mathbf{X}$ by $\mathbf{X}^+$ in the construction \eqref{eq:spectral}.
By taking $\mathbf{Z}$ to the power of $\alpha$, i.e. $m_\alpha^{-1}\max_{i=1,2,...} V_i^\alpha
   (\mathbf{X_i^+})^\alpha$,   we obtain a special
case of the construction \eqref{eq:3} which proves the max-stability
and the $\alpha$-Fr\'echet marginal distributions of $\mathbf{Z}$.
Since $V_1\sim\Phi_{\alpha}$, the radial variable in the elliptical
random vector $V_1\mathbf{X_1}$ is regularly varying with index $\alpha$
due to the variant of Breiman's theorem from Lemma 2.3 in
\cite{davis2008extreme}; thus  $V_1\mathbf{X_1}$ is in the MDA of the extremal $t$
distribution. At the same time,  $V_1\mathbf{X_1}$ is in the MDA of 
$\mathbf{Z}$  according to Lemma 3.1 in \cite{segers2012maxstable}. We conclude that
$\mathbf{Z}$ follows the extremal $t$ distribution with dependence
function \eqref{eq:Mt}. 
\end{proof}
As a direct application of Theorem \ref{theorem:ell}, we are now able to present one possible spectral representation of
extremal $t$ processes via the corresponding Gaussian process. 
\begin{corollary}[Spectral representation of extremal $t$ processes]
\label{coroll:spectral}
Suppose the following items are  given:
\begin{itemize}
  \item a tail index $\alpha>0$ and a correlation function
    $\mathrm{Cov}^*$, 
  \item  iid replications $\mathbf{W_i}$ of a standard Gaussian random
    field $\mathbf{W}$ on $S\subset {\mathbb{R}}^p$ with dispersion function
    $\mathrm{Cov}=\mathrm{Cov}^*$ and
    \item a Poisson process $\{ V_i\}\sim
        \mathrm{PRM}(\alpha v^{-(\alpha+1)}dv)$ on
        $(0,\infty)$.
\end{itemize}
Then the process defined by 
\begin{equation}
\label{eq:constr-t}
  \mathbf{Z}=\{Z(s)\} = \left\{{m_\alpha}^{-\alpha^{-1}}\max_{i=1,2,...}
  V_i W_i(s)\right\} \quad (s \in S)\ , 
\end{equation}
with $m_\alpha =
\sqrt{\pi}^{-1}2^{0.5(\alpha-2)}\Gamma(0.5(\alpha+1))$
 and $\Gamma(\cdot)$  the Gamma function, 
is an extremal $t$ process  with $\alpha$-Fr\'echet marginal
distributions. Its dependence structure is characterized by  $\alpha$ general degrees of freedom and
the correlation function $\mathrm{Cov}^*$. 
\end{corollary}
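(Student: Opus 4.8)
The plan is to derive the corollary as a direct application of Theorem \ref{theorem:ell}, applied simultaneously to every finite-dimensional margin of the candidate process. The starting observation is that a Gaussian random field is elliptical at finite dimensions: restricting $\mathbf{W}$ to any finite collection of distinct sites $s_1,\dots,s_d\in S$ yields a centered multivariate Gaussian vector, which is an instance of Definition \ref{def:ell} with location $\boldsymbol{\mu}=\mathbf{0}$ and dispersion matrix equal to the correlation matrix $\Sigma^*=(\mathrm{Cov}^*(s_{j_1},s_{j_2}))_{j_1,j_2}$. The normalizing constant required by Theorem \ref{theorem:ell} is $m_\alpha=\mathbb{E}[(X_1^+)^\alpha]$; since the dispersion function is a correlation function, each component $W(s_j)$ is standard normal, so $m_\alpha$ is a single universal constant, independent of the chosen site and dimension.

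First I would evaluate this half-normal moment explicitly to recover the stated formula for $m_\alpha$. Writing $m_\alpha=\int_0^\infty x^\alpha (2\pi)^{-1/2}e^{-x^2/2}\,dx$ and substituting $t=x^2/2$ turns it into a Gamma integral, yielding $m_\alpha=\sqrt{\pi}^{-1}2^{0.5(\alpha-2)}\Gamma(0.5(\alpha+1))$, which is finite and strictly positive for every $\alpha>0$. This simultaneously confirms the value of the constant appearing in \eqref{eq:constr-t} and verifies the integrability hypothesis $0<m_\alpha<\infty$ of Theorem \ref{theorem:ell}.

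Next I would apply Theorem \ref{theorem:ell} to each finite-dimensional margin. For fixed sites $s_1,\dots,s_d$, the vector $(Z(s_1),\dots,Z(s_d))$ extracted from \eqref{eq:constr-t} is precisely the construction \eqref{eq:spectral} associated with the elliptical (Gaussian) vector $(W(s_1),\dots,W(s_d))$ and dispersion matrix $\Sigma^*$. Theorem \ref{theorem:ell} then certifies that this margin is an extremal $t$ distribution with $\alpha$-Fr\'echet marginals and dependence function $M_{\alpha,\Sigma^*}$ given by \eqref{eq:Mt}. Because the same Poisson process $\{V_i\}$ and the same Gaussian replicates $\{W_i\}$ drive every margin at once, these finite-dimensional distributions are mutually consistent in the Kolmogorov sense, so \eqref{eq:constr-t} genuinely defines a process; since each of its finite-dimensional laws is then max-stable of extremal $t$ type, the process is max-stable by definition. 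Identifying the degree-of-freedom parameter with $\alpha$ and the correlation function with $\mathrm{Cov}^*$ completes the argument.

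The only point requiring care is the uniformity of the normalizing constant: the construction would fail to produce the advertised standardized $\alpha$-Fr\'echet margins if distinct sites carried differing variances, and it is exactly the choice $\mathrm{Cov}=\mathrm{Cov}^*$ that forces every marginal law to be standard normal and hence admits a common $m_\alpha$. I expect no genuine obstacle beyond this observation and the routine moment computation, since sharing the driving randomness across sites automatically supplies consistency and the substantive analytic work has already been carried out in Theorem \ref{theorem:ell}.
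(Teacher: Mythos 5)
Your proposal is correct and follows essentially the same route as the paper: the corollary is reduced to Theorem \ref{theorem:ell} applied to the finite-dimensional (Gaussian, hence elliptical) margins, and the only substantive computation is the half-normal moment $m_\alpha=\int_0^\infty x^\alpha(2\pi)^{-1/2}e^{-x^2/2}\,dx$ evaluated via the substitution $y=x^2/2$, exactly as in the paper. Your additional remarks on the universality of $m_\alpha$ under $\mathrm{Cov}=\mathrm{Cov}^*$ and on Kolmogorov consistency merely make explicit what the paper leaves implicit.
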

\begin{proof}
It remains to verify the value of $m_\alpha$. Using the
variable transformation $y=0.5x^2$ yields
\[
m_\alpha = \int_0^\infty x^{\alpha}(2\pi)^{-0.5} \exp(-0.5x^2)dx =
\int_0^\infty (2y)^{0.5\alpha}(2\pi)^{-0.5} \exp(-y)  (2y)^{-0.5}dy \ , 
\]
and gathering the involved constants leads to the desired representation of $m_\alpha$.
\end{proof}
\section{Discussion}
\label{sec:disc}
The $\alpha$-power
$\{Z^\alpha(s)\}$ of \eqref{eq:constr-t}  establishes unit Fr\'echet marginal
distributions as in the construction \eqref{eq:3} with
$\mathbf{Q}=(\mathbf{W}^+)^\alpha$. Clearly, we identify the extremal
Gaussian process ( \cite{schlather2002models}) for $\alpha=1$.
When $d=2$ and $\sigma^*_{12}=0$ in Theorem \ref{theorem:ell}, the range of the extremal coefficient
covers the open interval $(1.5, 2)$: 
For $\alpha=1$ corresponding to the extremal Gaussian process, the value is known to be $1+ 0.5\sqrt{2}$.
As the degree of freedom $\nu=\alpha$ tends to infinity in
\eqref{eq:Mt}, the univariate $t$-distribution converges towards the
normal distribution and its variance tends to $0$ such that
$M(\mathbf{1})=2\lim_{\nu\rightarrow\infty}  t_{\nu+1}(1 \mid 0, (1+\nu)^{-1})=2$ in \eqref{eq:Mt}.
As $\nu$ tends to $0$, we observe $M(\mathbf{1})=
2\lim_{\nu\rightarrow 0} t_{\nu+1}(1 \mid 0, (1+\nu)^{-1})
= 2t_1(1\mid 0, 1) = 2(\pi^{-1}\mathrm{arctan}(1) + 0.5) = 1.5$.
This helps understand the long-range dependence structure in models for
extremal $t$ processes since the applied correlation functions are
usually non-negative and approach $0$ as the distance between two points
increases to infinity.
 In particular, extremal $t$ processes can be considered more
flexible than extremal Gaussian processes or Brown-Resnick
processes which are both special cases; see \cite{davison2009modelling} for the case of the  Brown-Resnick
process which arises for some $\alpha$-dependent correlation structures as $\alpha$ tends
to infinity. 
Moreover, the formulation of the dependence function \eqref{eq:Mt} for the extremal Gaussian process
($\nu=\alpha=1$) is more general than the bivariate expressions obtained by
\cite{schlather2002models} and lends itself
more easily to interpretation.

Theorem \ref{theorem:ell} and Corollary \ref{coroll:spectral} allow us
to
simulate extremal $t$ distributions and processes with the
method devised in Theorem 4 of 
\cite{schlather2002models}, thus completing the range of  max-stable
models available for direct simulation. When the degree of freedom is large, the computational complexity may become very restrictive, making it difficult to assure a
good quality of simulation. However,  in this case the H\"usler-Reiss distribution
(\cite{husler1989maxima}; \cite{falk2010laws}) in the
multivariate case and the Brown-Resnick process in the
infinite-dimensional case could be adequate proxies
for some correlation structures. 
Future research should explore in more detail up to which 
degree of freedom $\alpha$
the simulation procedure is numerically feasible, and if the Brown-Resnick process
provides an adequate substitute around and beyond the "critical" value
of $\alpha$. The spectral construction \eqref{eq:constr-t} further opens the way
for tackling conditional simulation in the theoretical framework
developed by \cite{dombry2011regular} and applied in
\cite{dombry2011conditional} and \cite{dombry2012conditional}. 
Test procedures based on an estimate of   $\alpha$ could be devised to  check for  the nested submodels  of
the extremal Gaussian or  Brown-Resnick type.
\bibliographystyle{model2-names}
\bibliography{../SpectralMeasure/ref}
\end{document}